\def\nottoobig#1{{\hbox{$\left#1\vcenter to1.111\ht\strutbox{}\right.\n@space$}}}
\definecolor{lightblue}{rgb}{.60,.60,1}
\definecolor{lightred}{rgb}{1, .60, 0.60}
\newtheorem{theorem}{Theorem}[section]
\newtheorem{lemma}[theorem]{Lemma}
\newtheorem{claim}[theorem]{Claim}
\newcommand{\nat}{{\bf N}}
\def\nottoobig#1{{\hbox{$\left#1\vcenter
to1.111\ht\strutbox{}\right.\n@space$}}}
\newcommand{\prob}{{\rm Prob}}
\newcommand{\ie}{$\mbox{i.e.}$}
\newlength{\filength}
\newsavebox{\gcbox}
\sbox{\gcbox}{\framebox[\filength]{\rule{0ex}{2ex}}}
\newcommand{\qedblob}{\mbox{\rule[-1.5pt]{5pt}{10.5pt}}}
\def\literalqed{{\ \nolinebreak\hfill\mbox{\qedblob\quad}}}
\def\qed{\literalqed}
\newcommand{\singlespacing}{\let\CS=
\@currsize\renewcommand{\baselinestretch}{1}\tiny\CS}
\newcommand{\singlespacingplus}{\let\CS=
\@currsize\renewcommand{\baselinestretch}{1.25}\tiny\CS}
\newcommand{\doublespacing}{\let\CS=
\@currsize\renewcommand{\baselinestretch}{1.75}\tiny\CS}
\newcommand{\draftspacing}{\let\CS=
\@currsize\renewcommand{\baselinestretch}{2.0}\tiny\CS}
\def\zo{\{0,1\}}
\def\mapping{\rightarrow}
\newcommand{\leqp}{\leq^{+}}
\newcommand{\geqp}{\geq^{+}}
\newcommand{\eqp}{=^{+}}
\newcommand{\cc}[1]{CS^{(#1)}}
\def\@listI{\leftmargin\leftmargini \parsep 4.5pt plus 1pt minus 1pt\topsep6pt plus 2pt minus 2pt \itemsep  2pt plus 2pt minus 1pt}
\let\@listi\@listI
\author{ {Marius Zimand\/}
\thanks{  Department of Computer and Information Sciences, Towson University,
Baltimore, MD. http://triton.towson.edu/\~{ }mzimand ; The author  has been supported in part by the National Science Foundation through grant CCF 1811729.}}
\begin{document}

\title{Secret key agreement from correlated data, with no prior information}


\date{}

\maketitle

\begin{abstract} 
A fundamental question that has been studied in cryptography and in information theory is whether two parties can communicate confidentially using exclusively  an open channel. We consider the model in which the two parties hold inputs that are correlated in a certain sense. This model has been studied extensively in information theory, and  communication protocols have been designed which exploit the correlation to extract from the inputs  a shared secret key. However, all the existing protocols are not universal in the sense that they require that the two parties also know some attributes of the correlation. In other words, they require that each party knows something about the other party's input. We present a protocol that does not require any prior additional information. It uses space-bounded Kolmogorov complexity to measure correlation and it allows the two legal parties to obtain a common key that looks random to an eavesdropper that observes the communication and is restricted to use a bounded amount of space for the attack. Thus the protocol achieves complexity-theoretical security, but  it does not use any unproven result from computational complexity. On the negative side, the protocol is not efficient in the sense that  the computation of the two  legal parties uses more space than the space allowed to the adversary.
\end{abstract}

\section{Introduction}

 The goal  of a secret key  agreement protocol is to allow two parties that communicate through a public channel to obtain a shared string that is secret in some reasonable sense (e.g., information-theoretical, complexity-theoretical, or some other sense) to anyone that has observed the communication. 
There are some well-known such protocols, such as the Diffie-Hellman protocol, or various public-key cryptosystems, that are efficient and used in the real world. However, they have the disadvantage of relying on some unproven hardness conjectures
in computational complexity. Another setting is to assume  that the two parties hold at the beginning of the protocol pieces of information that have a certain degree of correlation. Then, in some circumstances, it is possible  to compute the shared secret key without any unproven assumption.  For a simple illustration, suppose that Alice holds a line $L$ in the $2$-dimensional affine space, and Bob holds a point $P$ which lies on $L$. Then Alice  sends Bob the slope of $L$, after which Bob, \emph{knowing that his $P$ is on $L$}, can compute the intercept of $L$. Now, both Alice and Bob have the intercept of $L$, which they can use as a secret key, because the adversary has only seen the slope, which is independent of the intercept.

In this paper,  we consider the latter type of secret key agreement protocols. Thus,  Alice starts with a string $x$, Bob starts with $y$, and, after several rounds of interacting via messages exchanged over a public channel,  they obtain at the end of the protocol  a common secret key, that is a string $z$ which is random conditioned by the transcript of the protocol. The protocol is probabilistically computable, {i.e.},  there exists a probabilistic algorithm so that Alice computes each of her messages by running the algorithm on her input and on the messages that she has received from Bob so far, and Bob computes his messages similarly.  As in the above example, if $x$ and $y$ are \emph{correlated} in some way, one can hope to use the information that is common to these strings to extract with high probability a secret key.

The study of this scenario has a long history in Information Theory and the common flavor of the results is that for many interpretations of   ``correlated,''  secret key agreement is possible. Leung~\cite{leu:t:secret}, Bennett et al.~\cite{ben-bra-rob:j:privacyamplification}, Maurer~\cite{mau:j:seckey},  Ahlswede and Csisz{\'{a}}r~\cite{ahl-csi:j:seckeyone} have started an extensive research line dedicated to the case when $x$ and $y$ are generated by a stochastic process, whose properties describe their correlation (see the survey of Narayan and Tyagi~\cite{nar-tya:b:secrecy}).
 Recently, Romashchenko and Zimand~\cite{rom-zim:c:mutualinfo}  have studied this problem in the very general framework of Algorithmic Information Theory using Kolmogorov complexity to gauge correlation without using any generative model for the provenance of $x$ and $y$. \footnote{We point out that unlike the protocols based on hardness assumption (e.g., Diffie-Hellman protocol) which achieve complexity-theoretic security and are efficient, the protocols in the works above achieve information-theoretic security but do not run in polynomial time.}
 
 In all these works,  Alice and Bob possess at the beginning of the protocol, in addition to  $x$ and $y$,  some information about how these strings are correlated.  For instance, in the above example, Bob knows that the point $P$ is on the line $L$.  In the  scenarios based on generative models, Alice and Bob know various attributes of the joint distribution of the two random variables $(X,Y)$ which describe the stochastic process that generates the pair $(x,y)$, such as entropy, ergodic properties, etc. In the algorithmic information theory setting used in~\cite{rom-zim:c:mutualinfo}, Alice  and Bob know the complexity profile of $(x,y)$, which is the $3$-tuple $(C(x), C(y), C(x,y))$, where $C(\cdot)$ denotes Kolmogorov complexity.  (Throughout this paper,  $C(x)$, called the Kolmogorov complexity of $x$ or the minimal description length of $x$, is the length of a shortest program that when executed by a universal Turing machine prints $x$.)

Can  Alice and Bob agree on a secret key without any additional prior information?  A disclaimer:   This is not really a problem relevant for cryptography, because the protocols are not efficient. We rather view it as  a question about the  fundamental limits of information processing and communication. The challenge is that Alice and Bob have to detect a type of correlation of their inputs through rounds of communication without leaking too much information to the eavesdropper, so that they can compute a shared secret key of reasonable length. 

What is a reasonable length of the secret key? The relevant parameter that comes into play is  the \emph{mutual information} of $x$ and $y$, denoted $I(x:y)$, which intuitively represents the amount of information that is shared by $x$ and $y$. In case we use Kolmogorov complexity to measure the amount of information, $I(x:y)$ is defined as $C(x) + C(y) - C(x,y)$, and,  up to logarithmic precision,  is also equal  to $C(x) - C(x \mid y)$ and to $C(y) - C(y \mid x)$.  It is shown in~\cite{rom-zim:c:mutualinfo} (extending a classical result from~\cite{ahl-csi:j:seckeyone} which is valid for inputs generated by memoryless processes, and which is using Shannon entropy to measure information), that no computable protocol (even probabilistic) can obtain a shared secret key longer than the mutual information of the inputs $x$ and $y$. On the other hand, a protocol is presented in ~\cite{rom-zim:c:mutualinfo} that with high probability produces a shared secret of length $I(x:y)$ (up to logarithmic precison), provided, as mentioned above,  the two parties know the complexity profile of the inputs. 
 Thus, the above discussion suggests that it is natural to aim for a shared secret key whose length is equal to the mutual information of the inputs, for some concept of information that measures the detectable correlation.

\textbf{Our contribution.}  We identify space-bounded Kolmogorov complexity as a concept of information   that allows secret key agreement without any prior information or special setup (e.g., shared randomness, special extra channel) between the two parties.  The space-bounded Kolmogorov  complexity with space bound $S$ of a string $x$, denoted $C^S(x)$,  is similar to standard Kolomogorov complexity except that the universal Turing machine is restricted to use  at most $S$ cells on the working tape (see Section~\ref{s:prereqs} for the formal definition). We show that the correlation  induced by  space-bounded Kolmogorov complexity can be determined without revealing much about $x$ and $y$, which, in turn,  allows the parties to compute a common secret key.

  The protocols that we design produce a  key $z$ that is random given the transcript in the sense of space-bounded Kolmogorov complexity,
where the transcript is the set of messages sent by Alice and Bob.
 Formally, we require that $C^S (z \mid {\rm transcript})$ is close to the length of $z$ (denoted $|z|$), for some space bound $S$. In other words, an eavesdropper which is bounded to use space $S$ and knows the transcript,  needs essentially $|z|$ bits to find the secret key $z$, which is the same as if she did not know the transcript.   If $C^S (z \mid{\rm  transcript}) \geq |z| - \Delta$, we say that $\Delta$ is the randomness deficiency of $z$ with respect to the transcript, and thus we want to obtain $z$ with small randomness deficiency. We also want the length of $z$ to be close to the mutual information of $x$ and $y$, which in the case of space-bounded Kolmogorov complexity is defined as $C^{S_1}(x) - C^{S_2}(x \mid y)$ for space bounds $S_1$ and $S_2$. We next present our results.

We recall that a function $S(n)$ is fully space constructible if there is a Turing machine $M$ that uses exactly $S(n)$ cells for every natural number $n$ and for every input of length $n$. 
\begin{theorem}
\label{t:main0}
Let $S$ be any  fully space constructible function.such that $S(n) \geq n$. 

There is a randomized protocol that allows Alice on input $x$ (an $n$-bit string)  and Bob on input $y$ (of arbitrary length)  to obtain with probability $(1-\epsilon)$  a common string $z$ such that 
\begin{enumerate}
\item[(i)] $|z|  \geqp C^{\lambda_1 \cdot S(n)}(x) - C^{\lambda_2 \cdot S(n)}(x \mid y) $,
\item[(ii)] $C^{S(n)} (z\mid \mbox{transcript}) \geqp  |z| - \Delta$,
\end{enumerate}
where  $\Delta \leq C^{\lambda_3 \cdot S(n)}(x) - C^{\lambda_4 \cdot  S(n)}(x)$, $\lambda_1, \lambda_2, \lambda_3, \lambda_4$ are constants that depend only on the universal Turing machine, and $\geqp$ hides a loss of precision bounded by $O(\log (n/\epsilon))$.

\end{theorem}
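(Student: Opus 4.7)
The plan is to design a two-phase protocol: an interactive \emph{sketch-and-decode} phase that lets Bob reconstruct $x$ from $y$, followed by a \emph{randomness extraction} phase in which both parties compute $z=\mathrm{Ext}(x,s)$ for a strong seeded extractor $\mathrm{Ext}$ and a public seed $s$. The reason for making the first phase interactive is precisely to avoid assuming that the parties know $C^{\lambda_2 S(n)}(x\mid y)$ in advance: rather than committing to a fixed sketch length, Alice sends random hash bits in chunks and Bob signals when he can decode.

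More concretely, I would have Alice draw a random universal hash $h$ and transmit $h(x)$ in chunks whose cumulative lengths $\ell_1<\ell_2<\cdots$ follow a geometric schedule. After the $i$-th chunk, Bob exhaustively searches for an $x'\in\{0,1\}^n$ consistent with the received hash bits and satisfying $C^{S'}(x'\mid y)\leq \ell_i$. Since there are at most $2^{\ell_i}$ such candidates, with probability at least $1-\epsilon/\mathrm{poly}(n)$ only one survives once $\ell_i$ exceeds $C^{\lambda_2 S(n)}(x\mid y)$ by $O(\log(n/\epsilon))$. To let Bob certify his decoding, Alice additionally sends an independent authentication tag of length $O(\log(n/\epsilon))$; once Bob's candidate passes the check, he transmits a stop signal. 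At termination the total number of revealed hash bits is $C^{\lambda_2 S(n)}(x\mid y)+O(\log(n/\epsilon))$, and every computation runs in space $O(S(n))$ provided $h$ and $\mathrm{Ext}$ are chosen sufficiently space-efficiently.

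For the extraction phase Alice broadcasts a short uniformly random seed $s$ and both parties output $z=\mathrm{Ext}(x,s)$. Choosing $|z|$ equal to $C^{\lambda_1 S(n)}(x)-C^{\lambda_2 S(n)}(x\mid y)$ minus the $O(\log(n/\epsilon))$ precision slack yields claim (i). For claim (ii), the transcript $\tau$ consists of the hash bits, the tag, the seed, and the stop signal; the last three have total length $O(\log(n/\epsilon))$, while the hash bits carry at most $C^{\lambda_2 S(n)}(x\mid y)+O(\log(n/\epsilon))$ bits of information about $x$. A space-$S(n)$ adversary holding $\tau$ therefore still needs roughly $C^{\lambda_1 S(n)}(x)-C^{\lambda_2 S(n)}(x\mid y)$ bits to describe $x$, and the known seed transfers this lower bound through the extractor to yield $C^{S(n)}(z\mid\tau)\geq |z|-\Delta$.

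The main technical obstacle is carrying out this incompressibility argument cleanly in the space-bounded world. In unbounded Kolmogorov complexity one can freely combine descriptions using symmetry-of-information or Muchnik-style theorems, but in the space-bounded setting each such composition can inflate the workspace by a constant factor. This is the reason the statement permits four distinct multipliers $\lambda_1,\ldots,\lambda_4$ of $S(n)$: Alice's sketch is calibrated against $C^{\lambda_2 S(n)}(x\mid y)$, yet the true compressibility of $x$ given $y$ may be strictly smaller at a larger space budget, and the discrepancy between $C^{\lambda_3 S(n)}(x)$ and $C^{\lambda_4 S(n)}(x)$ is precisely what bounds the resulting slack $\Delta$. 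The bulk of the proof will therefore consist in constructing $h$ and $\mathrm{Ext}$ so that any re-encoding of descriptions across the analysis incurs only constant blow-ups in space, and in verifying that the iterative nature of the sketch phase does not introduce additional uncontrolled space overhead.
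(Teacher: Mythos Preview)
Your Phase~1 is essentially what the paper does: a random linear hash (a random $\mathrm{GF}(2)$ matrix $H$), bits of $H\cdot x$ sent one at a time, and Bob searching at round $j$ over $\{u:C^{S(n)}(u\mid y,H)\le j\}$. The geometric schedule and the separate authentication tag are cosmetic variants.

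Phase~2 is where you diverge, and this is where your proposal has a real gap. The paper does \emph{not} use a seeded extractor for the key. Instead, both parties (now sharing $x$, $p$, $H$) take $z$ to be the first minimum-length program for $x$ given $p,H$ in space $S(n)$, so $|z|=C^{S(n)}(x\mid p,H)$ by definition. This auto-calibrates the key length---no one has to compute $C^{\lambda_1 S(n)}(x)$ separately---and it makes part~(ii) a short chain-rule calculation: since $x$ is recoverable from $(p,z,H)$ in space $S(n)$, one gets $C^{(\cdot)}(p,z\mid H)\ge C^{(\cdot)}(x\mid H)$, hence $C^{(\cdot)}(z\mid p,H)\ge C^{(\cdot)}(x\mid H)-C^{(\cdot)}(p\mid H)$, and $\Delta$ drops out as the difference between $C(x,p\mid H)$ at two nearby space bounds, which reduces to $C^{\lambda_3 S}(x)-C^{\lambda_4 S}(x)$.

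Your extractor route could in principle be made to work, but two things are missing. First, you write ``choosing $|z|$ equal to $C^{\lambda_1 S(n)}(x)-C^{\lambda_2 S(n)}(x\mid y)$'' without saying how the parties obtain this number; after Phase~1 they both hold $x$ and the sketch length, so this is fixable, but it must be spelled out. Second, and more seriously, ``the known seed transfers this lower bound through the extractor'' is not yet an argument in the space-bounded world: you would have to show that the set of sources $u$ for which $\mathrm{Ext}(u,s)$ lands in $\{v:C^{S(n)}(v\mid\tau)<|z|-\Delta\}$ for a $2\epsilon$-fraction of seeds $s$ is both small and enumerable in space $O(S(n))$, and then that $x$ lies outside it. Each such enumeration costs a constant factor in space, and you have not verified that the resulting $\Delta$ collapses to the promised form $C^{\lambda_3 S}(x)-C^{\lambda_4 S}(x)$ rather than dragging along an extra $C^{(\cdot)}(x\mid y)$ term. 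The paper's shortest-program choice sidesteps all of this because it gives the two-way computability $(p,z,H)\leftrightarrow x$ that the chain rule needs, for free.
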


Note: The notation $a \geqp b$ means that $a \geq b - \alpha$, where $\alpha$ is the specified loss of precision.
\smallskip 

The communication complexity of the protocol in the above theorem is $n^2 +  O(n\log(1/\epsilon)) + C^{S(n)}(x \mid y)$, which is very large. The protocol in~\cite{rom-zim:c:mutualinfo} (in which the parties also hold the complexity profile of $(x,y)$)  has communication complexity roughly $C(x \mid y)$, which is shown to be optimal. Thus,  in our case, it would be desirable to have a protocol with  communication complexity close to $C^{S(n)}(x \mid y)$. The protocol in the next theorem has \emph{information complexity}  $C^{S(n)}(x \mid y)$ plus a polylogarithmic term and communication complexity  $2C^{S(n)}(x \mid y)$ plus a polylogarithmic term.

\begin{theorem}[Main  Result]
\label{t:main2}
Let $S$ be a fully space constructible function such that $S(n) \geq p_0(n)$, where $p_0(n)$ is a fixed polynomial that only depends on the universal Turing machine.

There is a randomized protocol that allows Alice on input $x$ (an $n$-bit string)  and Bob on input $y$ (of arbitrary length)  to obtain with probability $(1-\epsilon)$  a common string $z$ such that 
\begin{enumerate}
\item[(i)] $|z|  \geqp C^{S(n)}(x) - C^{S(n)}(x \mid y) $,
\item[(ii)] $C^{S(n)} (z\mid \mbox{transcript}) \ge  |z| - \Delta$,
\end{enumerate}
where  $\Delta \leqp C^{\lambda^{-1} S(n)}(x) - C^{\lambda \cdot  S(n)}(x)$, $\lambda$ is a  constant that depends only on the universal Turing machine, and $\geqp$ hides a loss of precision bounded by $O(\log^3(n/\epsilon))$.

\noindent
Furthemore, the length of the transcript is bounded by $2 C^{S(n)}(x \mid y) + O(\log^3(n/\epsilon))$.
\end{theorem}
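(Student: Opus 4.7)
The plan is to refine the protocol of Theorem~\ref{t:main0} by transmitting the hash information about $x$ \emph{adaptively}, so that the total communication drops from $\Theta(n^2) + C^{S(n)}(x \mid y)$ down to $2 C^{S(n)}(x \mid y)$ plus polylogarithmic overhead. The protocol will have three phases: (a) an incremental hash-exchange phase during which Bob recovers $x$, (b) a short verification step, and (c) a randomness-extraction step that yields the common key $z$.

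\textbf{Incremental hash exchange with doubling.}  Alice samples a pairwise-independent hash $M: \{0,1\}^n \to \{0,1\}^\ell$ (encoded by a short seed sent once, up front) and computes $v = Mx$.  She then transmits the bits of $v$ to Bob in batches of geometrically increasing sizes $2, 4, 8, \ldots$.  After the $i$th batch, Bob attempts to decode by enumerating all candidate strings $\tilde x \in \{0,1\}^n$ having space-bounded conditional complexity $C^{S'(n)}(\tilde x \mid y) \leq t_i$ (for a threshold $t_i$ that equals the total number of hash bits received so far, minus an $O(\log(n/\epsilon))$ safety margin) and checking consistency with the received hash bits.  A union bound over the at most $2^{t_i}$ such candidates shows that once $t_i \geq C^{S'(n)}(x \mid y)$, with probability at least $1-\epsilon$ the unique survivor is $x$ itself.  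Bob signals Alice (one bit per round) when he succeeds; because the batches double, the total hash communication is at most $2 C^{S(n)}(x \mid y) + O(\log(n/\epsilon))$ bits, explaining the factor of $2$ in the transcript bound.

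\textbf{Verification and extraction.}  Once Bob signals, Alice transmits $O(\log(n/\epsilon))$ additional independent hash bits so Bob can confirm $\tilde x = x$ with failure probability $\leq \epsilon$.  Next, Alice sends a short seed $r$ and both parties output $z = \ext(x, r)$ using a strong randomness extractor (implemented by a fresh pairwise-independent hash) of output length $m \geqp C^{S(n)}(x) - C^{S(n)}(x \mid y) - O(\log^3(n/\epsilon))$.  The Kolmogorov-complexity analogue of the leftover hash lemma then gives (ii): the extracted $z$ has space-bounded complexity close to $|z|$ conditional on the transcript, provided $m$ is upper bounded by $C^{S(n)}(x \mid \text{prior transcript})$.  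Because the hash bits already sent reveal at most $\approx C^{S(n)}(x \mid y)$ bits of information about $x$ (the decoding batches are designed to carry exactly this much information, and the surplus bits from the doubling overshoot are essentially uniform and independent of $x$), we get $C^{S(n)}(x \mid \text{prior transcript}) \geqp C^{S(n)}(x) - C^{S(n)}(x \mid y)$, which matches (i).

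\textbf{Main obstacle.}  The hardest part is proving bound (ii) with the specific slack $\Delta \leqp C^{\lambda^{-1} S(n)}(x) - C^{\lambda \cdot S(n)}(x)$.  Bob's decoding step must be simulated inside the analysis at a slightly \emph{larger} space bound (hence $\lambda^{-1} S(n)$, tighter complexity), while the adversarial bound for $z$ is stated against a space-$S(n)$ observer (hence $\lambda \cdot S(n)$ appears when relating complexities across different space budgets).  The gap between $C^{\lambda^{-1} S(n)}(x)$ and $C^{\lambda \cdot S(n)}(x)$ quantifies how much the description of $x$ can shrink when extra space is allowed, and it is precisely this gap that limits the randomness quality of $z$.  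A secondary technical point is ensuring that the overhead from the doubling schedule, the acknowledgment bits, and the verification step all fit within the claimed $O(\log^3(n/\epsilon))$ additive term; this requires a careful choice of batch sizes and of the extractor's parameters.
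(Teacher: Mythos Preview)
Your proposal has a genuine gap at the very first step: a pairwise-independent hash family $M:\{0,1\}^n\to\{0,1\}^\ell$ cannot be ``encoded by a short seed.''  Any such family needs a seed of length $\Omega(n)$ (e.g.\ $a,b\in\mathrm{GF}(2^n)$ for $x\mapsto ax+b$), so sending the seed already costs $\Theta(n)$ bits and destroys the transcript bound $2C^{S(n)}(x\mid y)+O(\log^3(n/\epsilon))$ whenever $C^{S(n)}(x\mid y)\ll n$.  The same objection applies to the ``fresh pairwise-independent hash'' you use for extraction in Phase~3.  One might try to shrink the seed via a Newman-type argument, but the paper explicitly notes that this route introduces a loss logarithmic in $|y|$, which is fatal here because $y$ may be arbitrarily long.

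The paper's fix is exactly the missing idea: it replaces the random linear hash by the Raz--Reingold--Vadhan \emph{prefix extractor} $E_{\rm RRV}:\{0,1\}^n\times\{0,1\}^d\to\{0,1\}^n$ with seed length $d=O(\log^3(n/\epsilon))$.  Because every $k$-bit prefix of $E_{\rm RRV}(x,w)$ is itself the output of a $(k,\epsilon)$-extractor, Alice can send the bits of $E_{\rm RRV}(x,w)$ one at a time and Bob can attempt decoding after each bit; no doubling is needed, and the factor~$2$ in the transcript comes from Bob's one-bit acknowledgments, not from overshoot.  Bob's decoding is also more delicate than the plain enumeration you describe: at round $k$ he restricts to the first $s$ neighbours of the received prefix $p_k$ inside the extractor graph that lie in a complexity ball, and disambiguates among those $s$ candidates with a separate prime-modulus hash $h_t(x)$ sent in Round~0.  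The correctness argument hinges on a combinatorial lemma (``$x$ is not $\epsilon$-poor'') rather than on a direct union bound.

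A second divergence is Phase~2.  The paper does \emph{not} apply a randomness extractor to $x$; instead Alice and Bob both exhaustively search for a shortest space-$S(n)$ program $z$ for $x$ given $p$.  This automatically makes $|z|=C^{S(n)}(x\mid p)$, and the bound $C^{S(n)}(z\mid\text{transcript})\geq |z|-\Delta$ with $\Delta\leqp C^{\lambda^{-1}S(n)}(x)-C^{\lambda S(n)}(x)$ then falls out of two applications of the chain rule plus the observation that $p$ is computable from $x$ and a polylog-length seed.  Your appeal to a ``Kolmogorov-complexity analogue of the leftover hash lemma'' would need its own proof and would in any case require Alice to know the output length $m\approx C^{S(n)}(x)-C^{S(n)}(x\mid y)$ before extraction; she does not have $y$, so this quantity is not directly available to her.
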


In the above theorems, the  key $z$ has $\Delta$, the randomness deficiency conditioned by the transcript, bounded by $C^{S'(n)}(x) -  C^{S''(n)}(x)$, where $S'$ and $S''$ differ by a multiplicative constant. Thus, intuitively, $\Delta$ is small.   A particularly  favorable case is when $x$ is a shallow string.  A string $x$ is $S$-shallow if $C^{S(n)}(x) \eqp C(x)$, $\mbox{i.e.}$, if $S(n)$ is enough space to allow the construction of $x$ from a description which is close to a shortest description.  For every space bound $S$, most strings are $S$-shallow and in case $x$ is such a string then  $\Delta  \eqp 0$.

\subsection{Prerequisites}
\label{s:prereqs}

The $S$-space bounded Kolmogorov complexity of $x$ conditioned by $y$ with respect to a Turing machine $M$, denoted $C_M^S (x \mid y)$,  is defined by
\[
C_M^S(x \mid y) = \min \{|p| \mid M(p,y) = x \mbox{ and $M$ uses at most $S$ cells.} \}
\]

 In the case of space-bounded Kolmogorov complexity, simulation by the universal machine  incurs a constant blow-up in space usage.
More precisely, there exists a universal Turing machine $U$ and a constant $\gamma > 1$ such that for any space bound  $S$, for any Turing machine $M$ and for all strings $x, y$,
\[C_U^{\gamma S}(x \mid y) \leq C^{S}_M(x \mid y) + O(1).\]
 As usual,  we fix a universal machine $U$, and denote more simply $C^S(\cdot)$ instead of $C_U^S(\cdot)$. Also, in case the string $y$ used in the condition is the empty string, we drop the condition in  the notation.

The \emph{chain rules} for space-bounded Kolmogorov complexity are as follows: There exists a constant $\gamma > 1$ such that for any space bound $S$, it holds that:
\begin{equation}
\label{e:chain}
\begin{array}{ll}
C^{\gamma S} (x,y) & \leq \quad C^{S}(x) + C^{S}(y \mid x) + O(\log (|x| + |y|)), \\

C^{S} (x,y) &\geq \quad C^{\gamma S} (x) + C^{\gamma S} (y \mid x) +O(\log ( |x| + |y|)).
\end{array}
\end{equation}

To simplify the writing of expressions, we sometimes use the notation $\cc{i}(\dots) $ instead of $C^{\gamma^i \cdot S}(\dots)$, where $S$ is a space bound  and  $\gamma$ (or sometimes $\lambda$) is a constant which is clearly defined in the context. For instance, the last inequality will be written as $\cc{0}(x,y) \geq \cc{1}(x) + \cc{1}(y \mid x) + O(\log(|x| +|y|)$.

\if01
\section{An auxiliary result - probably will be eliminated}

In the next result, Alice and Bob use private random bits but the parameters are not useful if the length of $y$ is much larger (for example, exponentially larger) than the length of $x$.

\begin{theorem}
\label{t:main}
Let $S$ be a fully space constructible function, such that $S(n) \geq p_0(n)$, where $p_0(n)$ is a fixed polynomial that only depends on the universal Turing machine.

There is a randomized protocol that allows Alice on input $x$ (an $n$-bit string)  and Bob on input $y$ (an $m$-bit string)  to obtain with probability $(1-\epsilon)$  a common string $z$ such that 
\begin{enumerate}
\item[(i)] $|z|  \geqp C^{\lambda_1 \cdot S(n+m)}(x) - C^{S(n+m)}(x \mid y) $,
\item[(ii)] $C^{S(n+m)} (z\mid \mbox{transcript}) \geqp |z| - \Delta$,
\end{enumerate}
where  $\Delta \leq C^{S(n+m)}(x) - C^{\lambda_2 \cdot S(n+m)}(x)$,  $\lambda_1, \lambda_2$ are constants that depend only on the universal Turing machine, and $\geqp$ hides a loss of precision bounded by $O(\log(|x|+|y|) + \log 1/\epsilon)$.
\end{theorem}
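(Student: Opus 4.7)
The plan is to build the protocol in two phases: an information reconciliation phase, in which Alice transmits enough hash data for Bob to recover $x$ from his side information $y$, followed by a privacy amplification phase, in which both parties extract from the shared $x$ a secret key that looks random to the space-bounded eavesdropper. The central obstacle in reconciliation is that neither party knows $k := C^{S(n)}(x\mid y)$ in advance, so Alice cannot simply send the optimal $k$ hash bits up front. To overcome this I would use a doubling strategy: Alice fixes a space-efficiently computable pairwise-independent hash family (Toeplitz matrices are natural) and reveals hash values of $x$ in batches of exponentially growing size, so that after the $i$-th round the cumulative hash length is $2^i$. After each batch, Bob searches in bounded space for a candidate $x'$ that is consistent both with his $y$ (via a short space-$S(n)$ program producing $x'$ from $y$) and with all hashes revealed so far; Alice and Bob then exchange a short, freshly-seeded verification hash, and if the check fails they advance to the next batch. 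Because the doubling overshoots the optimum by at most a factor of two, Alice's total hash transmission on success is at most $2k + O(\log^3(n/\epsilon))$, matching the announced transcript bound.

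Correctness of reconciliation rests on two lemmas. First, once the revealed hash length exceeds $k$ by $O(\log(n/\epsilon))$, the at most $2^k$ strings $x'$ producible from $y$ by a space-$S(n)$ program of length $\le k$ are isolated with high probability by the pairwise-independent hash; this is a space-bounded analogue of the usual Slepian--Wolf isolation argument. Second, Bob's brute-force search can be carried out in space $O(S(n))$, which is precisely why the hypothesis $S(n) \ge p_0(n)$ is imposed: a fixed polynomial amount of space suffices to enumerate candidate programs and to run the universal machine on them.

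For privacy amplification, once both parties hold $x$ they apply a strong seeded extractor with a publicly-chosen seed to distill $z$ of length $|z| \geqp C^{S(n)}(x) - k$, which yields~(i). The security statement~(ii) I would prove via the chain rule~\eqref{e:chain}, bounding $C^{S(n)}(z\mid\tau)$ from below by $C^{\gamma S(n)}(x\mid\tau) - C^{\gamma S(n)}(x\mid z,\tau) - O(\log n)$ and observing that $(z,\tau)$ together with the extractor seed (already present inside $\tau$) leaves only about $|x|-|z|$ bits of freedom in reconstructing $x$. The resulting randomness deficiency $\Delta$ comes precisely from the space-blowup gap $C^{\lambda^{-1} S(n)}(x) - C^{\lambda S(n)}(x)$: all the $\gamma$-factors accumulated through chain rules and simulations collapse into a single constant $\lambda$.

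The hard part will be bookkeeping these space blow-ups across a multi-round protocol without letting $\lambda$ grow with $n$ or $\epsilon$. To contain this I would batch the rounds logarithmically so that only $O(\log n)$ chain-rule invocations enter the analysis, and choose the hash family and extractor to be evaluable in near-logarithmic auxiliary space, so that the constant-factor overheads compose into a single universal $\lambda$ independent of the inputs and of the accuracy parameter $\epsilon$.
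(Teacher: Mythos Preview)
Your approach diverges from the paper's in both phases, and the Phase~2 divergence is a genuine gap. The paper does \emph{not} build the key with a seeded extractor; instead, after reconciliation both parties compute $z$ by exhaustive search as a \emph{shortest program for $x$ given $p$} in space~$S$, where $p$ is what Alice transmitted. This is the crux of the security argument: because $(z,p)$ reconstructs $x$ in bounded space, one has $C(z,p)\geqp C(x,p)$, and the chain rule then yields $C(z\mid p)\geqp C(x\mid p)=|z|$ up to the space blow-up that becomes~$\Delta$. Your chain-rule sketch instead bounds $C(x\mid z,\tau)$ by ``about $n-|z|$'' (preimage counting under the extractor) and plugs this into $C(z\mid\tau)\ge C(x\mid\tau)-C(x\mid z,\tau)$; that reaches $|z|-\Delta$ only when $C(x\mid\tau)\approx n$, which is false whenever $C(x)$ is well below $n$. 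A seeded extractor is many-to-one in a way that pins $x$ down to within $|z|$ bits of its \emph{length}, not of its \emph{complexity}, so the deficiency bound does not follow from the argument you give. (A correct extractor-based route would invoke the Kolmogorov extraction lemma directly, but then the secure output length is capped by $C^{S}(x\mid\tau)$, and your doubling overshoot to roughly $2k$ transmitted hash bits would cost essentially another $k$ in key length relative to part~(i).)

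In Phase~1 the paper also proceeds differently: Alice sends the hash one bit per round, and at round~$j$ Bob searches the ball $\{u: C^{S}(u\mid y,\ldots)\le j\}$ for a match; there is no doubling and no separate verification exchange. This keeps $|p|\leqp C^{S}(x\mid y)$ rather than $2\,C^{S}(x\mid y)$, which is what makes $|z|\geqp C^{S}(x)-C^{S}(x\mid y)$ come out without a factor-of-two loss.

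A smaller point: the transcript bound $2k+O(\log^3(n/\epsilon))$ and the deficiency form $C^{\lambda^{-1}S}(x)-C^{\lambda S}(x)$ that you invoke belong to Theorem~\ref{t:main2}; the statement you were asked to prove carries no transcript bound, uses $S(n+m)$ rather than $S(n)$, and has precision loss $O(\log(|x|+|y|)+\log(1/\epsilon))$.
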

Note that if the length of $y$ is exponential in the length of $x$, the loss of precision $O(\log(|x|+|y|) + \log(1/\epsilon))$ makes the above expressions useless.  
\fi

\section{Outline of the proofs}

The proofs of both Theorem~\ref{t:main0} and Theorem~\ref{t:main2}     have the same structure. We present an outline, in which, for simplicity, we skip some technical details and ignore small factors in the quantitative relations. Recall that initially Alice holds $x$ and Bob holds $y$. The protocols in both proofs have two phases: (1) \emph{Information reconciliation}, in which Alice communicates  $x$ to Bob by sending him just enough information that allows him to obtain $x$ given his $y$, and (2) \emph{Secret key construction}, in which, separately, Alice and Bob compute the secret key $z$. All the communication happens in the Information reconciliation phase.
\medskip

 \emph{Phase 1} (Information reconciliation):  First, Alice and Bob agree on a space bound $S=S(n)$. Next, Alice sends Bob a randomized hash function $h$.   The goal  is for Alice to send  Bob,  as a fingerprint, some prefix of $h(x)$  that permits  Bob to construct Alice's string $x$ using the fingerprint  and his string $y$.  To avoid sending more information than what Bob needs, Alice sends the bits of $h(x)$ sequentially \emph{one bit per round}.  At each round,  Bob attempts to construct $x$ by checking if the fingerprint of  some string in a set of possible candidates matches the prefix of $h(x)$ that he has received so far. More precisely,  at each round 
$j$, the candidates are those  strings whose $S$-space-bounded complexity conditioned by $y$ is at most $j$.  If  Bob finds a string among these candidates with a fingerprint that matches the prefix of $h(x)$ sent so far by Alice, he believes that he has found $x$, tells  Alice to stop sending further bits by sending her ``1'',  and Phase 1 stops.  Otherwise, he tells Alice that he needs more bits by sending her ``0''  (in which case Alice sends in the  next round the next bit of $h(x)$).

Let $p$ be the prefix of $h(x)$ that Alice sends to Bob  during the entire Phase 1. Then,  with high probability,  at the end of Phase 1, 
 \begin{enumerate}
\item Bob has $x$,
 \item $|p| \leq  C^S (x \mid y)$, because we show that Bob can reconstruct $x$ by round $j = C^{S}(x \mid y)$.
  In the proof of Theorem~\ref{t:main0},  a random matrix $H$ also appears in the condition (as we explain below), but this has little impact, because $H$ is a random. 
 \end{enumerate}
 \medskip
 
 \emph{Phase 2} (Secret key construction):  After Phase 1,  both Alice and Bob have $x$ (with high probability). They both compute the shared secret key $z$ by exhaustive searching a minimal length
program of $x$ given $p$ in space $S$.
So, from $p$ and $z$, it is possible to construct $x$.  It follows that $z$ and $p$ are independent, because otherwise $z$ would not be minimal.  But then $z$ and the transcript of the protocol  are also almost independent, because the transcript consists of $p$ and the sequence ``$0 \ldots 01$'' sent by Bob,  and the complexity of $0\ldots 01$  is low (at most $\log n +O(1)$).  Thus, $z$ is a secret key. Let us now estimate the length of $z$. Since $x$ can be constructed from $p$ and $z$ in space $S$, it follows that $C^S(x) \leq |p| + |z|$, and thus the length of $z$ is at least $C^S(x) - |p|$, which, by the  above bound of $|p|$, is at least $C^S(x) - C^S(x \mid y)$, which is the mutual information of $x$ and $y$ in the framework of space-bounded Kolmogorov complexity.
\medskip

The main technical issue is finding the hash function that is used in Phase 1.  In the proof of Theorem~\ref{t:main0}, this is just a random linear function given by a random matrix $H$, chosen by Alice. $H$ is roughly $n^2$ bits long, and  Alice needs to also send $H$  to Bob. This is the reason the communication complexity is large.  Also, the information-theoretical considerations in Phase 2, are somewhat more delicate, because we need to take into account $H$. To reduce the communication complexity, one has to use a shorter  hash function. One idea is  to use Newman's theorem from communication complexity, in which $H$ is chosen from a smaller sample space. But the sample space needs to be effectively constructed, and the obvious way to do this leads to a loss of precision that is logarithmic in both the length of $x$ and of $y$, which can be very damaging in case $y$ is much longer than $x$. In Theorem~\ref{t:main2}, we use for hashing an explicit extractor of Raz, Reingold, and Vadhan~\cite{rareva:j:extractor}, which has the special property that if we take prefixes of the output, the extractor property is preserved. These type of extractors, called \emph{prefix extractors}, allow much more communication-efficient hashing,  in the sense that Alice does not need to send the hashing function to Bob, at the cost of making Bob's reconstruction of $x$ more complicated. 

In our technical  approach, we  were inspired by several papers.  Muchnik~\cite{muc:j:condcomp} has introduced  bipartite graphs similar to extractors and has used for a certain type of information reconciliation concepts similar to what we call  \emph{heavy nodes} and \emph{poor nodes} in the proof of Theorem~\ref{t:main2}.  Prefix extractors have been used  for information reconciliation  in~\cite{mus-rom-she:j:muchnik} and~\cite{zim:stoc:kolmslepwolf}, and the first paper analyzes the case of space-bounded Kolmogorov complexity. The application to secret-key agreement is a novel contribution of this paper.  Some of the information-theoretical estimations are similar to those in~\cite{rom-zim:c:mutualinfo}.  The idea of sending pieces of a fingerprint in several rounds for the problem of  information reconciliation (similarly  to our Phase 1) has been used before in~\cite{bra-rao:c;infcomplexity, koz:j:slepwolf}, and, the closest to our approach,  in~\cite{buh-kou-ver:c:randindcomm}, where they study the communication complexity of this problem in terms of the Kolmogorov complexity of the two inputs. There is however a significant difference with the information reconcilation phase in  our main result, because, as standard in communication complexity, the protocol in ~\cite{buh-kou-ver:c:randindcomm} is not computable, and therefore they can use  random hash functions for fingerprinting.

\section{Proof of Theorem~\ref{t:main0}}


\textbf{Phase 1: Information reconciliation.} 

Before sending the first message,  Alice takes  a random  matrix $H$ with entries in the finite field GF[$2$], with $(n+\log(1/\delta))$ rows and $n$ columns, where $n$ is the length of $x$ and $\delta = \epsilon/2n$.  The random matrix $H$ defines a random linear function $h$ mapping $n$ bit strings to $n + \log(1/\delta)$ bit strings (viewed as vectors over GF[$2$]), given by the expression $h(v) = H \cdot v$.

In Round $0$, Alice sends to Bob $n$,  $H$, and  the first $1+\log(1/\delta)$ bits of $h(x)$.


Then in each subsequent round, Alice sends to Bob the next bit of $h(x)$ till Bob announces that he does not need any additional bits. Thus, at round $j \ge 1$, Bob has received the first $(j+1) + \log(1/\delta)$ bits of $h(x)$, a string which we denote $p_{j}$. Bob checks if there is a string $u$ in $B_j= \{u \in \zo^{n} \mid C^{S(n)}(u \mid y, H) \leq j\}$ such that $p_{j}$ is a prefix of $h(u)$. If there is such a string $u$, he believes that $u$ is $x$, and announces that he does not need any extra bits and the information reconciliation stops here.  If there is no such string $u$, Bob announces that he needs more bits and the protocol proceeds with the next round.

Bob may be wrong at round $j$, if there is a string $u$ different from $x$ in $B_j$ such that the prefixes of length $(j+1) + \log(1/\delta)$ of $h(u)$ and $h(x)$ coincide.  For an arbitrary string $u \not= x$, the probability that $h(u)$ and $h(x)$ agree in the first $(j+1)  + \log(1/\delta)$ bits is 
$2^{-((j+1) + \log(1/\delta))} = \delta/2^{j+1}$.  Since $B_j$ has less than $2^{j+1}$ elements, by the union bound,  the probability that Bob is wrong at round $j$ is less than $\delta$. 

Let $k = C^{S(n)}(x \mid y, H)$. Let ${\cal E}$ be the event that Bob is wrong at one of the rounds $1, \ldots, k$.  ${\cal E}$ has probability  at most $k \delta \leq (n+c)\delta \leq 2n \delta = \epsilon$. Conditioned by ${\cal E}$ not being true, the protocol reaches round $k$, when Bob finds $x$. Thus, with probability $1-\epsilon$, at the end of round $k$, Bob has obtained $x$, and  the string $p := p_{k}$ is a program for $x$ given $y$ and $H$ in space $\gamma' \cdot S(n)$, for some constant $\gamma'$, and $p$ has length $C^{S(n)}(x \mid y, H)$.
\medskip

\textbf{Phase 2:  Secret key construction.}  By exhaustive search,  Alice and (separately)  Bob find $z$, the first program of $x$ given $p$ and $H$ in space $S$. We show that $z$ satisfies the conclusion of the theorem.
\medskip

We denote  $S:= S(n)$  and we let $\geqp$ hide a loss of precision of $O(\log(n/\epsilon))$. Recall that we use the notation $\cc{i}(\dots)$ in lieu of $C^{\lambda^i \cdot S}(\dots)$, where $\lambda$ is here the maximum between the above $\gamma'$ and $\gamma$ (the constant from the chain rule~\eqref{e:chain}).

First, we notice that, with high probability,  conditioning by a random $H$ does not decrease complexities by too much. 

\begin{claim}
\label{c:rand}
For every space bound ${\cal S}$, for every $n$-bit string $u$, for every string $v$, if $H$ is chosen uniformly at random independent of $u$ and $v$, we have $$\cc{0} (u \mid v,  H) \geqp \cc{2}(u \mid v) \mbox{ with probability $1-\epsilon$}.$$
\end{claim}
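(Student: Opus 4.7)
The plan is to prove Claim~\ref{c:rand} by contradiction, using a standard double-counting argument in the spirit of incompressibility. Fix $u \in \{0,1\}^n$, the string $v$, and the space bound $\mathcal{S}$, and write $m$ for the bit-length of the random matrix $H$. Set $k = \cc{2}(u \mid v) - c$, where $c$ is a slack of size $O(\log(n/\epsilon))$ to be chosen at the end. It suffices to show that the ``bad'' event $E_u = \{H : \cc{0}(u \mid v, H) \leq k\}$ has probability at most $\epsilon$, i.e.\ $|E_u| \leq \epsilon \cdot 2^m$.

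The heart of the argument is the following combinatorial step. For each $n$-bit string $w$, define $E_w$ analogously, and let $B = \{w \in \{0,1\}^n : |E_w| > \epsilon \cdot 2^m\}$. Counting triples $(p, H, w)$ with $|p| \leq k$ and $U(p, v, H) = w$ in space $\mathcal{S}$, there are at most $2^{k+1} \cdot 2^m$ such triples, since each fixed $H$ admits at most $2^{k+1}$ programs of length $\leq k$. Every $w \in B$ contributes more than $\epsilon \cdot 2^m$ triples, so $|B| < 2^{k+1}/\epsilon$. Now $B$ is enumerable from $v$ together with the small integer parameters $k, n, m, \lceil 1/\epsilon \rceil$: one iterates over $w$, and for each $w$ over $H$ and over short programs $p$, tallying for each $w$ the fraction of $H$ for which some $|p|\leq k$ produces $w$ in space $\mathcal{S}$, and collecting $w$ into $B$ exactly when this fraction exceeds $\epsilon$. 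Encoding $u$ by its index in this enumeration gives $\cc{2}(u \mid v) \leq \log|B| + O(\log(n/\epsilon)) \leq k + O(\log(n/\epsilon))$, which contradicts the definition of $k$ once $c$ is taken sufficiently large in $O(\log(n/\epsilon))$. Therefore $u \notin B$, which is the conclusion.

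The main obstacle is the space accounting in the enumeration of $B$. The outer loop over $H$ requires an $m$-bit counter, simulating $U(p, v, H)$ costs $\mathcal{S}$ cells, and keeping the tally against the threshold $\epsilon \cdot 2^m$ adds another $O(m)$ cells, so the overhead is $\mathcal{S} + O(m + \log(n/\epsilon))$. This total must be absorbed into $\gamma^2 \mathcal{S}$ by universal-machine simulation, which is why the claim budgets two $\gamma$-bumps of space rather than one: one covers the bookkeeping overhead (dominated by $m$), and the other covers the universal-simulation overhead from the chain rule. In the application to Theorem~\ref{t:main0}, where $m = O(n^2)$, this requires $\gamma^2 \mathcal{S}$ to dominate $n^2$; this is exactly the role the constants $\lambda_i$ play in the theorem statement, so it will suffice to fix $\gamma$ large enough relative to the fixed polynomial blowup of the matrix dimensions.
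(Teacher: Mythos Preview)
Your argument is correct and takes a genuinely different route from the paper. The paper's proof is a three-line application of the chain rule~\eqref{e:chain}: one writes
\[
\cc{0}(u\mid v,H)\ \geqp\ \cc{1}(u,H\mid v)-\cc{0}(H\mid v)\ \geqp\ \cc{2}(u\mid v)+\cc{2}(H\mid u,v)-\cc{0}(H\mid v),
\]
and then observes that a uniformly random $H$ satisfies $\cc{2}(H\mid u,v)\geq |H|-\log(1/\epsilon)-1$ with probability $1-\epsilon$ (by the standard counting argument) while $\cc{0}(H\mid v)\leq |H|+O(1)$ always, so the two $|H|$ terms cancel. You instead bypass the chain rule entirely and argue by direct incompressibility: the set $B$ of $n$-bit strings for which conditioning on a random $H$ drops the complexity below $k$ with probability exceeding $\epsilon$ is small (by double-counting short programs) and enumerable from $v$ and a few short parameters, so $u\in B$ would contradict the choice of $k$. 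The paper's route is shorter because the chain rule is already on the shelf, and it makes transparent that the two space bumps in $\cc{2}$ correspond to two chain-rule invocations; your route is more self-contained and exposes explicitly where the enumeration overhead sits.

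One remark on your final paragraph: the fix ``choose $\gamma$ large enough relative to the polynomial blowup'' does not literally work, since $m=\Theta(n^2)$ while Theorem~\ref{t:main0} assumes only $S(n)\geq n$, and no constant $\gamma$ makes $\gamma^2 S(n)$ dominate an additive $n^2$ when $S(n)=n$. This is not a gap peculiar to your approach, however: the same additive $O(|H|)$ bookkeeping is implicit in any proof of the chain rule~\eqref{e:chain} applied to the pair $(u,H)$, so the paper's argument carries the identical unspoken hypothesis that $S$ dominates $|H|$. The paper simply hides this inside the black-box chain rule; your proof has the virtue of making it visible.
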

\begin{proof} 
\begin{equation}
\label{e:Hestimate}
\begin{array}{ll}
\cc{0} (u \mid v,  H)  &\geqp \cc{1}(u, H \mid v) - \cc{0}(H \mid v) \\
& \geqp \cc{2}(u \mid v)+  \cc{2}(H \mid u,  v) - \cc{0}(H \mid v)  \\
& \geqp \cc{2}(u \mid v) \mbox{ with probability $1-\epsilon$.}
\end{array}
\end{equation}
In the first two lines, we use the chain rule, and in the last line, we use the fact that, for every $i$,  $\cc{i}(H \mid u,v) \geq |H| - \log(1/\epsilon) - 1$, with probability $1-\epsilon$ (by a standard counting argument) and $\cc{i}(H \mid v) \leq |H| + O(1)$ for every $H$.
\end{proof}

Now we  can show  part (i) of Theorem~\ref{t:main0}. 
\begin{equation}
\label{e:zlower}
\begin{array}{lll}
|z| & = \cc{0} (x \mid p, H) 
  & \geqp \cc{1} (x,p \mid H) - \cc{0}(p \mid H) \\ &&\quad\quad\quad  \mbox{(chain rule)} \\
  && \geqp \cc{2}(x \mid H) - |p| \\&& \quad\quad\quad  \mbox{(because $|p| \geqp \cc{0}(p\mid H)$)} \\
  && \geqp \cc{2} (x \mid H) - \cc{0}(x \mid y, H)   \mbox{ with probability $1-\epsilon$} \\&& \quad\quad\quad \mbox{(because $|p| = \cc{0}(x \mid y, H)$)}  \\
   && \geqp \cc{4} (x ) - \cc{0}(x \mid y)   \mbox{ with probability $1-2\epsilon$} \quad \mbox{(by Claim~\ref{c:rand})}
  \end{array}
\end{equation}

Next we move to part (ii), where we need to show that the complexity of the secret key $z$, conditioned by the transcript of the protocol,  is  close to the length of $z$. The transcript consists of $p$, $H$, $n$ (all sent by Alice to Bob) and of Bob's sequence of responses $s = 000 \dots 01$ of length $\ell = k+ 1 + \log (1/\epsilon)$. Bob's sequence has complexity bounded by $\log \ell +O(1) = O(\log(n/\epsilon))$, and therefore, for every $i$ we have $\cc{i}(z \mid s, p, H, n) \eqp \cc{i}(z \mid p, H)$. Thus we can ignore $s$ and $n$ in the condition and it is enough to bound from below  $\cc{i}(z \mid p, H)$. We show the following estimation, which ends the proof of the theorem.

\begin{claim} With probability $1-2 \epsilon$, 
$\cc{4}(z \mid p, H) \geqp |z| - \Delta$, 
where $\Delta = \cc{-2}(x) -  \cc{8}(x)$.
\end{claim}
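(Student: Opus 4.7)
The plan is to lower bound $\cc{4}(z \mid p, H)$ by converting any putative short description of $z$ given $(p, H)$ into an equally short description of $x$ given $(p, H)$, then pulling that back to a lower bound on the complexity of $x$ given $H$ and finally on the complexity of $x$ unconditionally via the chain rule and Claim~\ref{c:rand}. Separately, I will bound $\cc{5}(p \mid H)$ tightly by $\cc{-2}(x) - |z|$, using that $p$ is essentially determined by $(x, H)$. Combining the two produces a lower bound on $\cc{4}(z \mid p, H)$ of the form $|z| - (\cc{-2}(x) - \cc{8}(x)) - O(\log)$, matching the statement of the claim.

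For Step 1, any program of length $\cc{4}(z \mid p, H)$ that outputs $z$ from $(p, H)$ in space $\gamma^4 S$ can be composed with $z$ itself, viewed as a program that outputs $x$ from $(p, H)$ in space $S$; since $|z| \leq n \leq S$, the composed computation fits in space $\gamma^5 S$, yielding $\cc{5}(x \mid p, H) \leq \cc{4}(z \mid p, H) + O(1)$. For Step 2, combining the upper chain rule $\cc{6}(x, p \mid H) \leq \cc{5}(p \mid H) + \cc{5}(x \mid p, H) + O(\log)$ with the lower chain rule $\cc{6}(x, p \mid H) \geq \cc{7}(x \mid H) - O(\log)$ (obtained by dropping $\cc{7}(p \mid x, H) \geq 0$), and then using Claim~\ref{c:rand} to replace $\cc{7}(x \mid H)$ by $\cc{8}(x) - O(\log)$ with probability at least $1 - \epsilon$, I obtain
\[ \cc{4}(z \mid p, H) \geq \cc{8}(x) - \cc{5}(p \mid H) - O(\log). \]

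The crucial and non-routine part is Step 3, the tight upper bound on $\cc{5}(p \mid H)$. The input is that $p$ is the first $|p|$ bits of $H \cdot x$ with $|p| \leq n + \log(1/\epsilon) + O(1)$. Encoding $|p|$ takes only $O(\log(n/\epsilon))$ bits, and the matrix-vector product can be streamed one output bit at a time in $O(\log n)$ work space, so the hypothesis $S(n) \geq n$ ensures $\cc{-2}(p \mid x, H) = O(\log(n/\epsilon))$. The upper chain rule then gives $\cc{-1}(x, p \mid H) \leq \cc{-2}(x \mid H) + O(\log) \leq \cc{-2}(x) + O(\log)$, and the lower chain rule gives $\cc{-1}(x, p \mid H) \geq \cc{0}(p \mid H) + |z| - O(\log)$. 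Together with the monotonicity $\cc{5}(p \mid H) \leq \cc{0}(p \mid H)$, this yields $\cc{5}(p \mid H) \leq \cc{-2}(x) - |z| + O(\log)$.

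Substituting Step 3 into the previous display gives $\cc{4}(z \mid p, H) \geq \cc{8}(x) - (\cc{-2}(x) - |z|) - O(\log) = |z| - \Delta - O(\log)$, which is the claim. The main obstacle I expect is not any individual inequality but the bookkeeping: the exponents of $\gamma$ must be threaded through three applications of the chain rule and one application of Claim~\ref{c:rand} in such a way that all losses collapse to $O(\log(n/\epsilon))$, which is why the claim is stated with the particular indices $-2$, $4$, and $8$ and with the $\geqp$ precision.
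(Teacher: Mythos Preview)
Your proof is correct in structure and follows the same skeleton as the paper---chain rules, Claim~\ref{c:rand}, and the fact that $x$ is recoverable from $(p,z,H)$---but with one organizational difference worth noting. The paper routes through the intermediate quantity $\cc{3}(x\mid y)$: it first derives $|z|\leqp\cc{-2}(x)-\cc{3}(x\mid y)$ (using that Bob reconstructs $x$ from $(p,y,H)$), separately bounds $\cc{4}(p\mid H)\leqp\cc{3}(x\mid y)$, and lets $\cc{3}(x\mid y)$ cancel when the two are combined. Your Step~3 bypasses $y$ altogether: you obtain $\cc{0}(p\mid H)\leqp\cc{-2}(x)-|z|$ directly from the definition $|z|=\cc{0}(x\mid p,H)$ and one application of the chain rule, never invoking Bob's reconstruction. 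This is cleaner and makes the deficiency bound hold independently of the success of Phase~1; it also needs only one invocation of Claim~\ref{c:rand}, so the probability is actually $1-\epsilon$ rather than $1-2\epsilon$.

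One bookkeeping slip: Claim~\ref{c:rand} shifts the space index by~$2$, so applying it to $\cc{7}(x\mid H)$ yields $\cc{9}(x)$, not $\cc{8}(x)$. Your decomposition---first pass to $\cc{5}(x\mid p,H)$, then chain to $\cc{6}(x,p\mid H)$, then drop to $\cc{7}(x\mid H)$---picks up one more factor of $\lambda$ than the paper, which uses the single sweep
\[
\cc{4}(p\mid H)+\cc{4}(z\mid p,H)\ \geqp\ \cc{5}(p,z\mid H)\ \geqp\ \cc{6}(x\mid H)\ \geqp\ \cc{8}(x).
\]
As written you therefore prove the claim with $\Delta=\cc{-2}(x)-\cc{9}(x)$. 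To recover the stated $\cc{8}(x)$, merge your Steps~1--2 into that one chain-rule line and then feed in your Step~3 via the monotonicity $\cc{4}(p\mid H)\leq\cc{0}(p\mid H)$.
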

\begin{proof}
We need an upper bound of $|z|$:
\begin{equation}
\label{e:zupper}
\begin{array}{lll}
|z| & = \cc{0} (x \mid p, H) 
  & \leqp \cc{-1}(x,p \mid H) - \cc{0}(p \mid H) \\ && \quad\quad\quad \mbox{(chain rule)} \\
  && \leqp \cc{-2}(x \mid H) - \cc{1}(x \mid y, H)  \mbox{ with probability $1-\epsilon$}  \vspace{0.1cm}  \\ \vspace{0.1cm}  &&   \hspace{-0.2cm} \mbox{($p$ can be computed from $x$, $H$ and its length;  and $x$ from $p,y,H$)} \\
  && \leqp \cc{-2}(x)  - \cc{3}(x \mid y) \mbox{ with probability $1-2\epsilon$} \\
&&  \quad\quad\quad \mbox{(by Claim~\ref{c:rand})}
\end{array}
\end{equation}
Next, 
\begin{equation}
\label{e:pzupper}
\begin{array}{ll}
\cc{5}(p,z \mid H) & \geqp \cc{6} (x \mid H)   \mbox{ with probability $1-\epsilon$} \\ &\quad\quad \mbox{($x$ can be computed from $p,z,H$)}\\
& \geqp \cc{8}(x)  \mbox{ with probability $1-2\epsilon$}  \quad\quad\quad \mbox{(by Claim~\ref{c:rand})},
\end{array}
\end{equation}
and
\begin{equation}
\label{e:pzlower}
\begin{array}{ll}
\cc{5}(p,z \mid H) & \leqp \cc{4}(p \mid H) + \cc{4}(z \mid p, H) \\   & \quad\quad\quad \mbox{(chain rule)} \\
& \leqp \cc{3}(x \mid y, H) + \cc{4}(z \mid p, H)  \\ &  \quad\quad\quad  \mbox{($p$ can be computed from $x, H$ and its length)} \\
& \leqp \cc{3}(x \mid y) + \cc{4}(z \mid p, H) 
\end{array}
\end{equation}
Combining inequalities~\eqref{e:pzlower} and~\eqref{e:pzupper}, we obtain
\begin{equation}
\cc{4}(z \mid p, H) \geqp \cc{8}(x) - \cc{3}(x \mid y) \mbox{ with probability $1-2\epsilon$.} 
\end{equation}
Using inequality~\eqref{e:zupper}, we finally obtain
\begin{equation}
\cc{4}(z \mid p, H) \geqp |z| - \Delta \mbox{ with probability $1-4\epsilon$,} 
\end{equation}
where $\Delta = \cc{-2}(x) -  \cc{8}(x)$. The conclusion follows after rescaling $\epsilon$.
\end{proof}

\section{Proof of Theorem~\ref{t:main2}}
\medskip

We first present \emph{extractors}, which have been studied in the theory of pseudorandomness (for example, see~\cite{vad:b:pseudorand}). A particular type of extractor, \emph{prefix extractor}, is used in the protocol in the proof of Theorem~\ref{t:main2} for hashing. 

We recall that a $(k, \epsilon)$ extractor is a function $E: \zo^n \times \zo^d \rightarrow  \zo^m$ with the property that for every subset $B \subseteq \zo^n$ of size at least $2^k$ and for every  subset $A \subseteq \zo^m$:
\begin{equation}
 \label{eq:extractorDef}
\bigg| \prob[E(U_B, U_d) \in A] - \frac{|A|}{M} \bigg| < \epsilon,
\end{equation}
where $U_B$ and $U_d$ are independent random variables that are uniformly distributed over $B$ and, respectively, $\zo^d$.

It is useful to view an extractor $E$ as a bipartite graph $G$, whose set of left nodes is $\zo^n$, the set of right nodes is $\zo^m$, and each left node $x$ has $2^d$ (not necessarily distinct) right neighbors  
$\{E(x, w) \mid w \in \zo^d\}.$  The right node $E(x,w)$, for random $w \in \zo^d$, is viewed as the random fingerprint of the left node $x$.  

As usual, we use  \emph{explicit}  extractors. An explicit extractor is a family of extractors $\{E_n\}_{n \in \nat}$ as above, indexed by $n$, and with the rest of the parameters $k,d,m,\epsilon$ being functions of $n$, such that there exists an algorithm that computes $E_n(x,w)$ in time polynomial in $n$. Actually, for us it is more important the space complexity of the algorithm that computes the extractor.

We denote $D = 2^d, M=2^m$. Let $B$ be a set of left nodes. The average numbers of neighbors in $B$ of a right node (called the average $B$-degree) is $avg = |B| \cdot D/M$. We say that a right node $p$ is $\epsilon$-heavy for $B$ if it has more $(1/\epsilon) \cdot  avg$ left neighbors in $B$.  We say that a left node $u \in \zo^n$ is $\epsilon$-poor  for $B$ if a fraction larger than $2\epsilon$ of its right neighbors are $\epsilon$-heavy for $B$. Intuitively, a heavy $p$ is a fingerprint that causes many collisions, and $u$ is poor if many of its fingerprints produce many collisions.

The relevant property of extractors is presented in the next lemma. The point  is that an $\epsilon$-poor string is difficult to  handle because  a random fingerprint of it produces many collisions. The lemma gives a criterion which guarantees that a string is not $\epsilon$-poor.
\begin{lemma}
\label{l:ext}
There exist constants $\lambda > 1$ and $c$ with the following property:

Let $E: \zo^n \times \zo^d \mapping \zo^m$ be a $(k-c, \epsilon)$ extractor computable in space $S(n)$ (in the above sense). 
Let $x$ be  an $n$-bit string (which in the protocol is Alice's input) and $y$ be a string (which is $Bob's$ input), such that $C^{S(n)}(x \mid y, n, k) \leq k$ and $C^{\lambda S(n)} (x \mid y, n, k-1) > k-1$,  and let $B = \{u \in \zo^n \mid C^{S(n)} (u \mid y,n, k) \leq k\}$.  Then $x$ is not $\epsilon$-poor for $B$. 
\end{lemma}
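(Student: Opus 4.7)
The plan is to argue by contradiction: assume that $x$ is $\epsilon$-poor for $B$ and construct an unusually short program for $x$ given $y$, $n$, $k-1$, contradicting $C^{\lambda S(n)}(x\mid y,n,k-1)>k-1$. Since $k=(k-1)+1$ is trivially computable from $k-1$, the decoder may freely use $k$. Set $L=\{u\in B\mid u\text{ is $\epsilon$-poor for }B\}$, so that $x\in L$ under the assumption.

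The heart of the argument is a double use of the extractor property which forces at least one of $B$ and $L$ to be small. Fix a constant $c$ to be calibrated at the end and split into cases according to whether $|B|\ge 2^{k-c}$ and whether $|L|\ge 2^{k-c}$. If $|B|<2^{k-c}$, enumerate $B$ (in space $O(\gamma S(n))$ by simulating the universal machine on every program of length $\le k$) and describe $x$ by its index in $B$; this is a program of length $\log|B|+O(1)<k-c+O(1)$. If instead $|L|<2^{k-c}$, enumerate $B$ as above and, for each $u\in B$, compute $\deg_B(E(u,w))$ for every $w$ by one extra scan over $\{0,1\}^n\times\{0,1\}^d$; this lets the decoder decide $\epsilon$-poorness and thus enumerate $L$ in space a constant multiple of $S(n)$, and we describe $x$ by its index in $L$.

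In the remaining case $|B|\ge 2^{k-c}$ and $|L|\ge 2^{k-c}$ I claim no such $x$ exists, provided $\epsilon\le 1/2$. Applying the $(k-c,\epsilon)$-extractor with source $U_B$ and target $H$ gives
\[
\tfrac{|H|}{\epsilon M}\;<\;\Pr[E(U_B,U_d)\in H]\;<\;\tfrac{|H|}{M}+\epsilon,
\]
where the lower bound is just the heaviness condition $\deg_B(p)>|B|D/(\epsilon M)$ summed over $p\in H$; this collapses to $|H|/M<\epsilon^2/(1-\epsilon)$. Applying the extractor with source $U_L$ and target $H$ gives $\Pr[E(U_L,U_d)\in H]<|H|/M+\epsilon$, while $\epsilon$-poorness of each $u\in L$ forces $\Pr[E(U_L,U_d)\in H]>2\epsilon$, whence $|H|/M>\epsilon$. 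Combining, $\epsilon<|H|/M<\epsilon^2/(1-\epsilon)$, which is impossible for $\epsilon\le 1/2$. So one of the first two cases must hold, giving a program for $x$ of length $<k-c+O(1)$; choosing $c$ large enough makes this strictly less than $k-1$, the desired contradiction.

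The main obstacle is absorbing every computational overhead into the single multiplicative constant $\lambda$ promised by the statement: the decoder performs nested enumerations (over $B$, over right neighbors, and over $(u,w)$ pairs to tally degrees), and each inner step runs the universal machine in space $S(n)$ and evaluates the extractor $E$ in space $S(n)$. Each nesting level contributes a constant factor of space, while the additive $O(n+k+d+m)$ counter overhead is dominated by $S(n)$ (since $E$ must at least read its input). Hence the entire decoder fits in $\lambda S(n)$ for a constant $\lambda$ depending only on $U$ and on the space complexity of $E$, and $c$ and $\lambda$ are chosen jointly so that $k-c+O(1)<k-1$.
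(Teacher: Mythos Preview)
Your proof is correct and follows the same overall strategy as the paper: show that the set $L$ of $\epsilon$-poor strings in $B$ has fewer than $2^{k-c}$ elements, then describe $x$ by its rank in an enumeration of $L$, contradicting the assumed lower bound on $C^{\lambda S(n)}(x\mid y,n,k-1)$.

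The paper reaches the size bound on $L$ more directly and avoids your case split. Your lower bound $\Pr[E(U_B,U_d)\in H]>|H|/(\epsilon M)$ is just an edge count; combining it with the trivial upper bound $1$ (rather than with the extractor applied to $B$) already yields $|H|/M<\epsilon$, with no hypothesis on $|B|$ and no restriction $\epsilon\le 1/2$. A single application of the extractor to $L$ then gives $|L|<2^{k-c}$ immediately, since $\Pr[E(U_L,U_d)\in H]>2\epsilon\ge |H|/M+\epsilon$. So your Case~1 and the first half of Case~3 are unnecessary, and only your Case~2 survives. Your handling of the space accounting is essentially the same as the paper's.
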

\begin{proof}  Let $A$ be the set of strings that are $\epsilon$-heavy for $B$.  By counting the edges between $B$ and $A$ from left-to-right and from right-to-left, we obtain that $|A|/M \leq \epsilon$.  
Let  ${\rm POOR}$ be the set of nodes that are $\epsilon$-poor for $B$. Note that
 \[
 \prob(E(U_{\rm POOR},U_d) \in A) > 2\epsilon \geq |A|/M + \epsilon, 
\]
It follows that ${\rm POOR}$ has size less than $2^{k-c}$, because otherwise the set ${\rm POOR}$  would violate the property that $E$ is a $(k-c, \epsilon)$-extractor. 

Given $y,n,k,c$, the set ${\rm POOR}$ can be enumerated using space $S(n) +O(n) $ (we need the second term to maintain several counters which require $O(n)$ space). Taking into account the additional space needed by the universal machine, it follows that for some constant $\lambda$, 
if $u$ is an $\epsilon$-poor node then
\[
C^{\lambda S(n) }(u \mid y,n,k,c) \leq k-c+ O(1), 
\]
which implies $C^{\lambda S(n)}(u \mid y,n,k-1) \leq k-1$, for sufficiently large $c$. It follows that $x$ is not $\epsilon$-poor, which proves the lemma.
\end{proof}

We need to use a \emph{prefix extractor}, which is a a function $E: \zo^n \times \zo^d \mapping \zo^n$ with the property that for every $k \leq n$, the function $E_k$ obtained by retaining only the prefix of length $k$ of $E(x,w)$ is a $(k, \epsilon)$ randomness extractor. Raz, Reingold and Vadhan~\cite{rareva:j:extractor} have obtained an explicit extractor $E_{\rm RRV}$  of this type with $d = O(\log^3(n/\epsilon))$. 
 $E_{\rm RRV}(x,w)$ can be computed in time polynomial in $n$ (recall that  $n = |x|$). Let  $p_0 (n)$ be the polynomial that bounds the \emph{space} used in the computation of $E_{\rm RRV}(x,w)$.

In the protocol, we use the Raz-Reingold-Vadhan prefix extractor $E_{\rm RRV}$. We denote by $E_k$, the $k$-prefix of  $E_{\rm RRV}$, and, abusing notation, also the bipartite graph corresponding to the $(k, \epsilon)$ extractor $E_k$. 

In addition to $E_{\rm RRV}$, we use a hash function $h$, based on congruences modulo prime numbers.   We view a string $x$ as an integer (in some canonical way) and define $h_t(x) = (x \bmod{q}, q)$, where $q$ is a prime number chosen at random among the first $t$ prime numbers.  The properties of $h_t$ follow from the following lemma.
\begin{lemma}[\cite{bau-zim:c:linlist}]
\label{l:mod}
Let $x_1, x_2 \ldots, x_s$ be distinct $n$-bit strings, which we view in some canonical way as integers $< 2^{n+1}$. Let $t = (1/\epsilon) \cdot s \cdot n$.  Let $q$ be a prime number chosen uniformly at random among the first $t$ prime numbers. Then, with probability $(1-\epsilon)$,
\[
x_1 \bmod{q} \not\in \{x_2 \bmod{q}, \ldots, x_s \bmod q\}.
\] 
\end{lemma}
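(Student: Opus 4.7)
The plan is to reduce the collision probability to a question about prime divisibility and then use the fact that integers below $2^{n+1}$ have very few distinct prime factors. For each $j \in \{2, \ldots, s\}$, observe that $x_1 \bmod q = x_j \bmod q$ if and only if the prime $q$ divides the nonzero integer $x_1 - x_j$. Since $|x_1 - x_j| < 2^{n+1}$ and every prime is at least $2$, if $|x_1 - x_j|$ had $k$ distinct prime divisors, their product (which divides $|x_1-x_j|$) would be at least $2^k$; hence $k \leq n$. Let ${\cal B}_j$ denote the (at most $n$ element) set of primes dividing $x_1 - x_j$, and set ${\cal B} = \bigcup_{j=2}^{s} {\cal B}_j$, so that $|{\cal B}| \leq (s-1) \cdot n < s \cdot n$.

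Next, since $q$ is drawn uniformly at random from the set of the first $t = sn/\epsilon$ primes, the probability that $q \in {\cal B}$ is at most $|{\cal B}|/t < sn/t = \epsilon$. When $q \notin {\cal B}$, none of the congruences $x_1 \equiv x_j \pmod{q}$ holds for $j > 1$, and consequently $x_1 \bmod q \notin \{x_2 \bmod q, \ldots, x_s \bmod q\}$, which is exactly the claimed conclusion. The union bound over the bad events is already built into the counting of $|{\cal B}|$.

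There is essentially no hard step in this argument: the only quantitative input is the crude but tight inequality that an integer $m$ has at most $\log_2 m$ distinct prime divisors, together with a union bound over the $s-1$ pairs. In particular, no invocation of the prime number theorem or any density statement about primes is required, since the probability is computed directly with respect to the uniform distribution on the first $t$ primes. The main thing to be careful about is the accounting of the constant in front of $sn$ when bounding $|{\cal B}|$; using $k \leq n$ (rather than the looser $k \leq n+1$) is what makes the stated $t = sn/\epsilon$ exactly sufficient without any extra multiplicative slack.
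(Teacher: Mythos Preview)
Your argument is correct and is the standard proof of this folklore-style lemma. The paper itself does not prove the statement; it simply cites it from~\cite{bau-zim:c:linlist}, so there is no in-paper proof to compare against. One small expository remark: the justification that $k$ distinct prime divisors force the product to be at least $2^k$ is not literally ``every prime is at least $2$'' (since the primes are distinct), but rather that the squarefree part of $|x_1-x_j|$ already divides it and is bounded below by $2^k$; equivalently, counting prime factors \emph{with multiplicity} gives the bound $k \le \log_2 m$ directly, and the distinct count is no larger. This is exactly what you state correctly in your closing paragraph, so the argument stands as written.
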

\medskip

We now present the protocol. Recall that at the beginning of the protocol, Alice holds an $n$-bit string $x$, and Bob holds a string $y$. We fix the parameters as follows. Let $\lambda$ and $c$ be the constants guaranteed by Lemma~\ref{l:ext}, let $s= (1/\epsilon)\cdot 2^{c+1} \cdot D$,  where $D = 2^d = 2^{O(\log^3(n/\epsilon))}$ is the degree of the $E_{\rm RRV}$ extractor, and let $t = (1/\epsilon) \cdot s \cdot n^2$. We use the space bound $S(n)$ and the constant $\lambda >1$, given by Lemma~\ref{l:ext} applied to the  $E_{\rm RRV}$ extractor.  We assume that the polynomial $p_0$ and the constant $c$, promised by Lemma~\ref{l:ext},  are large enough so that for every string $x$ and every condition string $u$,  $C^{p_0(|x|)}(x \mid u) \leq |x|+c$. As we did earlier, we use the abbreviated notation $\cc{i}(\ldots)$ for $C^{\lambda^i \cdot S(n)}(\ldots)$.
\smallskip

\textbf{Phase 1: Information reconciliation.}

In Round $0$, Alice sends to Bob, $n$ and $h_t(x)$, where $h_t$ is the hash function introduced above.

Next, Alice computes $p' = E_{\rm RRV}(x,w)$ for a random $w \in \zo^d$. 

Alice sends to Bob  the string $p'$ (or rather a prefix of it),  one bit per round, till Bob announces that he does not need more bits.

Suppose we are at round $k$, after Alice has sent the $k$-th bit of $p'$.
Thus, by now Bob has received $p_k$, the $k$-th bit long prefix of $p'$.  He calculates, as we explain next,   a set of candidate strings, which he thinks might be $x$. 
A string $x'$ is a candidate at round $k$ if 
\begin{enumerate}
\item $x'  \in B=  \{u \in \zo^n \mid \cc{n-k} (u \mid y,n,k+c) \leq k+c\}$,  and

\item $x'$ is a neighbor of $p_k$, when viewing $x'$ as a left node  and $p_k$ as a right node in the graph $E_k$, and

\item $x'$ is among the first (in some canonical order) $s$ strings with the above two properties.
\end{enumerate}
 If no candidate has the fingerprint $h_t(x)$, then Bob asks for the next bit of $p'$.  Otherwise, there is one candidate string $x'$ so that $h_t(x') = h_t(x)$.  Then Bob believes that $x'$ is Alice's $x$, and he responds to Alice that he does not need further bits. The Phase 1 (information reconciliation) of the protocol is over.
\medskip

We now analyze Phase 1 (information reconciliation). We show that with high probability, at the end of Phase 1, Bob obtains $x$. 
\medskip

Let $k^* = \min\{ k \mid \cc{n-k} (x \mid y, n, k+c) \leq k+c\}$.  By the above largeness assumptions for $c$ and $p_0(n)$, it follows that $k^* \le n$. Let ${\cal E}$ be the event that there exists $x'$ other than $x$ that is a candidate at one of the rounds  $1, 2, \ldots, k^*$ and has the same fingerprint as $x$ (\ie, $h_t(x') = h_t(x)$).  The total number of candidates from rounds $1, 2, \ldots, k^*$ is at most $k^* \cdot s \leq n \cdot s$. It follows from Lemma~\ref{l:mod}, that  ${\cal E}$ has probability at most $\epsilon$.  Conditioned on ${\cal E}$ not holding, either Bob finds correctly $x$ before round $k^*$ (this happens if $x$ is a candidate at one of these earlier rounds), in which case we are done, or Phase 1 reaches round $k^*$.

Suppose Phase 1 reaches  round $k^*$.  Let $B=  \{u \in \zo^n \mid \cc{n-k^*} (u \mid y,n,k^*+c) \leq k^*+c\}$.   Clearly,  by the definition of $k^*$, $$\cc{n-k^*}  (x \mid y, n, k^*+c) \leq k^*+c$$ and  $$\cc{n-k^* +1}  (x \mid y, n, k^*+c-1) > k^*+c-1.$$   Now we use Lemma~\ref{l:ext} for the pair $(x,y)$, the $(k^*,\epsilon)$ extractor $E_{k^*} : \zo^n \times \zo^d \mapping \zo^{k^*}$ and the set $B$. 
The size of $B$ is less than $2^{k^*+c+1}$ and the average $B$-degree of a right node is $avg = |B|\cdot D/2^{k^*} \leq 2^{c+1} \cdot D$.
 By Lemma~\ref{l:ext} and the two inequalities above, $x$ is not $\epsilon$-poor, which means that with probability $1-2\epsilon$, $p_{k^*}$  is a right neighbor of  $x$ that is not heavy, i.e., it has at most  $(1/\epsilon) \cdot avg \leq  (1/\epsilon) \cdot 2^{c+1} \cdot D= s$ neighbors  in $B$.  Therefore, conditioned on non ${\cal E}$, with probability $1-2 \epsilon$,  $x$ is a candidate at round $k^*$, and Bob finds it. We conclude that with probability larger than $1-3 \epsilon$, Bob correctly obtains $x$.
 
 Let $p$ be the part of the protocol's transcript that Alice has sent to Bob. For the analysis of Phase 2, we need to evaluate the length of $p$. The string $p$ consists of  $n$, $h_t(x)$ and the prefix of $p'$ that Alice has sent bit-by-bit before Bob told her that he does not need any further bits.   By the analysis above, with probability $1-3 \epsilon$, the length of the prefix of $p'$ is at most $k^*$. Let $k = C^{S(n)}(x \mid y) - c$.  Note that $k \leq n$. Since
 $$\cc{n-k} (x \mid y, n, k+c) \leq  \cc{0}(x \mid y) =  k+c,$$ it follows  from the definition of $k^*$ that  $k^* \leq k$. Next, the length of $n$ and $h_t(x)$ is  $O(\log^3(n/\epsilon))$  because the $t$-th largest prime number is less than $t \log t$. We conclude that 
\begin{equation} \label{e:pxy}
|p| \leqp \cc{0}(x \mid y).
\end{equation}  
The communication complexity is $2|p|$,  because it consists of $p$ and of Bob's responses $00\ldots 01$.
\medskip

\textbf{Phase 2: Secret key construction.}

Alice and Bob compute by exhaustive search from $x$ and $p$ a program $z$ of $x$ given $p$ in space $S(n)$ of minimal length $\cc{0}(x \mid p)$.\medskip

We now show that the protocol satisfies the requirements  of Theorem~\ref{t:main2}, and we start with part (i). We let $\geqp$ hide a loss of precision of $O(\log^3(n/\epsilon))$.
We have
\[
\begin{array}{ll}
\cc{0}(x) &\leqp |p| + |z|  \quad\quad\quad \mbox{(because $x$ is computed from $p$ and $z$ in space $S(n)$)}\\ \\
&\leqp \cc{0}(x \mid y) + |z| \quad\quad\quad \mbox{(by \eqref{e:pxy})}
\end{array}
\]
Hence, $|z| \geqp \cc{0}(x) - \cc{0}(x \mid y)$.

Next we show part (ii)  in Theorem~\ref{t:main2}.
First notice that, by the chain rule,
\begin{equation}
|z| = \cc{0}(x \mid p) \leqp \cc{-1}(x,p) - \cc{0}(p).
\end{equation}
Next,  
\[
\begin{array}{ll}
\cc{0} (z \mid p) &\geqp \cc{1}(z,p) - \cc{0}(p) \\ &\quad\quad\quad \mbox{(chain rule)} \\
&\geqp  \cc{1}(x,p) - \cc{0}(p) \\ &\quad\quad\quad \mbox{(because $x$ can be computed from $z$ and $p$ in space $S(n)$)}\\
&= \cc{-1}(x,p) - \cc{0}(p) - (\cc{-1}(x,p) - \cc{1}(x,p)) \\ \\
&\geqp |z| - \Delta,
\end{array}
\]
where $\Delta = \cc{-1}(x,p) - \cc{1}(x,p)$. Since $p$ can be computed from $x$ and the seed of the extractor and the random prime number $q$ used by $h_t$ in space $p_0(n) \leq S(n)$, we have
\[
\begin{array}{ll}
\Delta  &\leqp  \cc{0}(x) - \cc{1} (x). 
\end{array}
\]
The transcript  of the protocol consists of $p$ and Bob's sequence of responses $00\ldots 01$, which has complexity bounded by $\log n$.
Therefore
\[
\cc{0}(z \mid transcript)  \geqp \cc{0}(z \mid p) \geqp |z| - \Delta, 
\]
which proves part (ii) of Theorem~\ref{t:main2}.~\qed


\section{Final comments}
\label{s:final}
As we have mentioned in the Introduction, the main results are of theoretical, rather than practical,  relevance.  The secret key agreement protocols in Theorem~\ref{t:main0} and Theorem~\ref{t:main2} produce a key that looks random to an adversary whose computation is space-bounded by $S(n)$, and, on the other hand, in both theorems,  the two legal  parties (i.e., Alice and Bob)  execute the protocol in space larger than  $S(n)$.   For this reason, the protocols do not seem to be suitable for real cryptographic applications.

Another observation regards the key length.  In Theorem~\ref{t:main2}, the protocol, on inputs the $n$-bit string $x$ and the string $y$,  runs in space bounded by $\lambda^n  S(n)$ (we take into account the space used by the two parties combined) for some constant $\lambda > 1$  and produces a secret key $z$ of length $|z| \approx C^{S(n)}(x) - C^{S(n)} (x \mid y)$  and having the randomness deficiency of $z$ conditioned by the trancript  as stated in the theorem. Recall that the randomness deficiency $\Delta$ is defined by $\Delta =  |z| - C^{S(n)}(z \mid transcript)$.  Is the length of $z$ optimal?   It is known from~\cite{rom-zim:c:mutualinfo}, that no computable protocol can produce a key longer than $C(x) - C(x \mid y)$, the mutual information of the inputs hold by the two parties. We have not been able to obtain a similarly clean result for protocols that run in space $S(n)$. By adapting the arguments in~\cite{rom-zim:c:mutualinfo}, it can be shown, that  if a protocol  runs in space $S(n)$ then, for every pair of inputs $(x,y)$ with length bounded by $n$,  it produces a key $z$ with $C^{\lambda^2 S(n)} (z \mid transcript) \leq C^{S(n)}(x) - C^{\lambda^3 S(n)} (x \mid y)$, for some constant $\lambda > 1$. Thus we obtain the following upper bound: If a secret key agreement protocol runs in space $S(n)$  and on input $(x,y)$, with $|x|,|y| \leq n$, it produces a secret key $z$ with 
 randomness deficiency $\Delta$, then
\[|z| \leq   C^{S(n)}(x) - C^{\lambda^3 S(n)} (x \mid y) + \Delta + \Delta_1,\]
 where $\Delta_1 = C^{(S(n)}(z \mid transcript) - C^{\lambda^2 S(n)} (z \mid transcript)$ and $\lambda > 1$ is a constant.

\section{Acknowledgements}
I want to thank  Andrei Romashchenko for useful discussions. I also thank the anonymous referees for their observations which have helped me correct some errors and improve the presentation.

\bibliography{theory-3}

\begin{thebibliography}{BKV08}

\bibitem[AC93]{ahl-csi:j:seckeyone}
Rudolf Ahlswede and Imre Csisz{\'{a}}r.
\newblock Common randomness in information theory and cryptography - {I:}
  secret sharing.
\newblock {\em {IEEE} Trans. Information Theory}, 39(4):1121--1132, 1993.

\bibitem[BBR88]{ben-bra-rob:j:privacyamplification}
Charles~H. Bennett, Gilles Brassard, and Jean-Marc Robert.
\newblock Privacy amplification by public discussion.
\newblock {\em SIAM Journal on Computing}, 17(2):210--229, 1988.

\bibitem[BKV08]{buh-kou-ver:c:randindcomm}
Harry Buhrman, Michal Kouck{\'{y}}, and Nikolai~K. Vereshchagin.
\newblock Randomised individual communication complexity.
\newblock In {\em Proceedings of the 23rd Annual {IEEE} Conference on
  Computational Complexity, {CCC} 2008, 23-26 June 2008, College Park,
  Maryland, {USA}}, pages 321--331. {IEEE} Computer Society, 2008.

\bibitem[BR11]{bra-rao:c;infcomplexity}
Mark Braverman and Anup Rao.
\newblock Information equals amortized communication.
\newblock In Rafail Ostrovsky, editor, {\em {IEEE} 52nd Annual Symposium on
  Foundations of Computer Science, {FOCS} 2011, Palm Springs, CA, USA, October
  22-25, 2011}, pages 748--757. {IEEE} Computer Society, 2011.

\bibitem[BZ14]{bau-zim:c:linlist}
Bruno Bauwens and Marius Zimand.
\newblock Linear list-approximation for short programs (or the power of a few
  random bits).
\newblock In {\em {IEEE} 29th Conference on Computational Complexity, {CCC}
  2014, Vancouver, BC, Canada, June 11-13, 2014}, pages 241--247. {IEEE}, 2014.

\bibitem[Koz18]{koz:j:slepwolf}
Alexander Kozachinskiy.
\newblock On {S}lepian-{W}olf theorem with interaction.
\newblock {\em Theory Comput. Syst.}, 62(3):583--599, 2018.

\bibitem[LYC76]{leu:t:secret}
Sik~Kow Leung-Yan-Cheong.
\newblock Multi-user and wiretap channels including feedback, July 1976.
\newblock Tech. Rep. No. 6603-2, Stanford Univ.

\bibitem[Mau93]{mau:j:seckey}
Ueli~M. Maurer.
\newblock Secret key agreement by public discussion from common information.
\newblock {\em {IEEE} Trans. Information Theory}, 39(3):733--742, 1993.

\bibitem[MRS11]{mus-rom-she:j:muchnik}
D.~Musatov, A.~E. Romashchenko, and A.~Shen.
\newblock Variations on {M}uchnik's conditional complexity theorem.
\newblock {\em Theory Comput. Syst.}, 49(2):227--245, 2011.

\bibitem[NT16]{nar-tya:b:secrecy}
Prakash Narayan and Himanshu Tyagi.
\newblock Multiterminal secrecy by public discussion.
\newblock {\em Foundations and Trends in Communications and Information
  Theory}, 13(2-3):129--275, 2016.

\bibitem[RRV02]{rareva:j:extractor}
Ran Raz, Omer Reingold, and Salil~P. Vadhan.
\newblock Extracting all the randomness and reducing the error in {T}revisan's
  extractors.
\newblock {\em J. Comput. Syst. Sci.}, 65(1):97--128, 2002.

\bibitem[RZ18]{rom-zim:c:mutualinfo}
Andrei~E. Romashchenko and Marius Zimand.
\newblock An operational characterization of mutual information in algorithmic
  information theory.
\newblock In {\em 45th International Colloquium on Automata, Languages, and
  Programming, {ICALP} 2018, July 9-13, 2018, Prague, Czech Republic}, pages
  95:1--95:14, 2018.

\bibitem[Vad12]{vad:b:pseudorand}
Salil~P. Vadhan.
\newblock Pseudorandomness.
\newblock {\em Foundations and Trends in Theoretical Computer Science},
  7(1-3):1--336, 2012.

\bibitem[Zim17]{zim:stoc:kolmslepwolf}
Marius Zimand.
\newblock Kolmogorov complexity version of {S}lepian-{W}olf coding.
\newblock In Hamed Hatami, Pierre McKenzie, and Valerie King, editors, {\em
  Proceedings of the 49th Annual {ACM} {SIGACT} Symposium on Theory of
  Computing, {STOC} 2017, Montreal, QC, Canada, June 19-23, 2017}, pages
  22--32. {ACM}, 2017.

\end{thebibliography}
\bibliographystyle{alpha}

\end{document}